\newtheorem{thm}{Theorem}
\newtheorem{example}{Example}
\newtheorem{lemma}{Lemma}
\newtheorem{prop}{Proposition}
\newtheorem{cor}{Corollary}
\newtheorem{remark}{Remark}
\begin{document}

\title{$K$-anonymous Signaling Scheme}
%\subtitle{aa}
\author{Binyi Chen \\
Shanghai Jiaotong University
\and 
Tao Qin \\
Microsoft Research Asia
\and 
Tie-Yan Liu \\
Microsoft Research Asia}
\date{September, 2013}
\maketitle
\begin{abstract}
  We incorporate signaling scheme into Ad Auction setting, to achieve better welfare and revenue while protect users' privacy. We propose a new \emph{$K$-anonymous signaling scheme setting}, prove the hardness of the corresponding welfare/revenue maximization problem, and finally propose the algorithms to approximate the optimal revenue or welfare.

\end{abstract}

\section{Introduction}
In real ad Exchange market, auctioneer and bidders will have long term business. On the one hand, leaking impression's information will reduce bidders' competition and thus lose revenue. For example, advertisers usually favor a certain category of users, companies selling football kits only wish to advertise to soccer fans. When the search engine informs the advertiser that the user is a soccer fan, only football companies are willing to bid, which leads to a thinner market and less revenue for the search engine. On the other hand, this will also do harm to users' privacy protection, which is a highly important issue in Internet business. Therefore, it is important to protect users' privacy and hide information of coming impressions.

In order to raise the revenue while protecting privacy, we can inform that the user belongs to a ¡®ball game¡¯ category irrespective of his interest in soccer, basketball, or tennis. Most of the sports companies will bid for this user, leading to higher revenue for the search engine. In the mathematical model, we constrained the support (soccer, basketball, tennis, etc) of the categories (ball game) to be larger than K and our objective is to build up the type (soccer) to category (ball game) map that enables the search engine to achieve the best welfare/revenue.

In section 2, we propose the specific setting of the problem. In section 3, we firstly prove that it is NP-hard to solve K-anonymous signaling welfare maximization problem and it cannot be approximated in factor less than $e/(e-1)$. Then we give a $2e/(e-1)$ approximation algorithms. Finally, we propose several special settings that can be solved optimally in polynomial time. In section 4, we prove the NP-Hardness of revenue maximization, and give a method to transfer welfare maximization approximation result into revenue approximation. In section 5, we note the possible future work.

\section{Problem Formulation}

There is one impression coming for auction, the impression belongs to one of $m$ category, the prior probability that the impression belongs to the $i$th category is  $p_i$. The $i$th bidder of $n$ bidders values the $j$th category as $v_{ij}$. Auctioneer will broadcast a signal $S$ after knowing the impression's category and raise a second price auction. The map between categories and signals is public to bidders. Denote $\phi(j, S)$ as the probability that auctioneer will broadcast signal $S$ when category is $j$. In this setting, $\phi(j,S) \in \{0,1\}$. Finally, auctioneer wants to construct the best map that can extract the most revenue or social welfare, while satisfies that each signal's support(categories that will broadcast this signal) $sup_{\phi,S} = \{j:\phi(j,S)=1\}$ is equal or larger than $K$.

Without \emph{K-constraint}, the model is a typical \emph{pure signaling scheme} mentioned in \cite{ec12a}.  For a signal $S$, the probability of broadcasting $S$ is $\sum_{j}p_{j}\phi(j,S)$. Given the auctioneer broadcasting $S$, the probability that the category is $j$ is $Pr[j|S] = p_{j}\phi(j,S)/(\sum_{j'}p_{j'}\phi(j',S)$. The expected welfare achieved by broadcasting $S$ is
\[
\max_{i} \{\mathbf{E}[v_i | S] \} = max_{i} \{\frac{\sum_{j}v_{i,j}p_{j}\phi(j,S)}{\sum_{j}p_{j}\phi(j,S)} \}
\]
Therefore, the objective is to find a map $\phi$ that satisfies \emph{K-constraint} to maximize
\[
\sum_{S}Pr[S]\max_{i} \{\mathbf{E}[v_i | S] \} = \sum_{S}\max_{i} \{\sum_{j}(v_{i,j}p_{j})\phi(j,S) \}
\]
We use $V_{i,j}$ to describe $v_{i,j}p_{j}$ for simplicity later. The revenue maximization problem is similar by replacing maximum with the second maximum since second price auction is truthful in this setting.

We observe that this problem is equivalent to the bundling scheme problem: there are $m$ items for sale, bidders have different value to different item. The task is to partition items into several bundles, each bundle has size at least $K$. Then auctioneer sells bundles separately by second price auction. The goal of auctioneer is to maximize social welfare/revenue.

\section{Welfare Maximization}
In this section we firstly prove the NP-Hardness of welfare maximization problem and prove it cannot be approximated in factor less than $e/(e-1)$. Then we give a $2e/(e-1)$ approximation algorithms. Finally, we propose several special settings that can be solved optimally in polynomial time.

\subsection{Hardness Results}

When both number of signals used and bundle size have no constraint, welfare maximization problem is trivial by giving category $j$ impression to bidder who values it most. However, when signal number is constrained \emph{(Cardinality Constrained Signaling Problem)}, or each signal's support size has to be no less than $K$ \emph{(K-anonymous Signaling Problem)}, the problem becomes NP-Hard.

We prove the NP-Hardness of K-anonymous welfare-maximization problem by a reduction from \cite{dughmi}.

\begin{prop} \cite{dughmi}
There is no polynomial-time $c$-approximation algorithm for welfare-maximization with known valuation, when signal number is constrained, for any constant $c<\frac{e}{e-1}$, unless $P=NP$.
\end{prop}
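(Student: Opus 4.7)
\medskip
\noindent\textit{Proof proposal.} The plan is to give an approximation-preserving reduction from Maximum $k$-Coverage, which by Feige's theorem cannot be approximated within any factor strictly better than $e/(e-1)$ unless $P = NP$. Given a coverage instance with universe $U = \{1,\dots,n\}$, sets $A_1,\dots,A_m \subseteq U$, and parameter $k$, I would build the signaling instance by introducing $n$ categories (one per element), uniform prior $p_j = 1/n$, and $m$ bidders with indicator valuations $v_{ij} = \mathbf{1}[j \in A_i]$; the number of signals is constrained to exactly $k$.

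The heart of the argument is the identity $\mathrm{OPT}_{\mathrm{welf}} = \frac{1}{n}\mathrm{OPT}_{\mathrm{cov}}(k)$. To show the upper bound, I would observe that any feasible scheme partitions the categories into $k$ bundles $S_1,\dots,S_k$ and can be evaluated by picking, for each bundle, a single witness bidder $i_\ell^{\star}$ attaining $\max_i |S_\ell \cap A_i|$; the resulting welfare $\frac{1}{n}\sum_\ell |S_\ell \cap A_{i_\ell^{\star}}|$ is bounded by $\frac{1}{n}\bigl|\bigcup_\ell A_{i_\ell^{\star}}\bigr|$ because the $S_\ell$ are pairwise disjoint, so each element of $U$ is counted at most once. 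For the matching lower bound, given an optimal cover $\{A_{j_1},\dots,A_{j_k}\}$ I would produce a partition by assigning each covered element $c$ to some bundle $S_\ell$ with $c \in A_{j_\ell}$ and scattering uncovered elements arbitrarily; reading off welfare with bidders $j_1,\dots,j_k$ as witnesses recovers $\frac{1}{n}\mathrm{OPT}_{\mathrm{cov}}(k)$ exactly.

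Because the correspondence is exact up to a universal scaling by $1/n$, any $c$-approximation for welfare pulls back to a $c$-approximation for Max $k$-Coverage, so $c < e/(e-1)$ would contradict Feige's barrier. The reduction itself is a polynomial-time mapping requiring no gap amplification; the only mildly subtle step is the upper-bound direction, where one has to collapse each bundle's welfare contribution to a single witness bidder and then invoke disjointness of the $S_\ell$ to convert the objective into a coverage quantity. Because this identity is exact rather than approximate, the inapproximability constant transfers verbatim, which is why the same $e/(e-1)$ threshold appears here as in Feige's bound.
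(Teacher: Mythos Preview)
Your argument is correct and is precisely the standard reduction from Maximum $k$-Coverage (invoking Feige's $e/(e-1)$ inapproximability) that the cited reference of Dughmi, Immorlica, and Roth uses. The paper itself gives no argument at all---its entire proof is the sentence ``The proof is directly from \cite{dughmi}''---so your proposal supplies strictly more detail than the paper, and there is nothing further to compare.
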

\begin{proof}
The proof is directly from \cite{dughmi}.
\end{proof}

\begin{thm}
There is no polynomial-time $c$-approximation algorithm for welfare-maximization with known valuation and K-anonymous constraint, for any constant $c<\frac{e}{e-1}$, unless $P=NP$.
\end{thm}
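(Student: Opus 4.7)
The plan is to mirror Dughmi's Max $k$-Cover-to-cardinality-constrained reduction but target the K-anonymous variant: match the support lower bound $K$ to the desired signal count rather than bound the signal count directly. Given a Max $k$-Cover instance $(U,\mathcal{F},k)$ with $|U|=m$ (WLOG $k\mid m$), I will construct the K-anonymous welfare-maximization instance whose categories are the elements of $U$ under uniform prior $p_j=1/m$, whose bidders are the sets $F\in\mathcal{F}$ with indicator values $v_{F,j}=\mathbf{1}[j\in F]$, and whose K-anonymous parameter is $K=m/k$. Any K-anonymous-feasible partition then consists of at most $m/K=k$ parts, each of size at least $K$.

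For the upper bound, any such partition $S_1,\ldots,S_r$ ($r\le k$) with best sets $F_\ell^{*}=\arg\max_F|F\cap S_\ell|$ yields welfare $\tfrac{1}{m}\sum_\ell|F_\ell^{*}\cap S_\ell|$. Disjointness of the $S_\ell$'s gives $\sum_\ell|F_\ell^{*}\cap S_\ell|\le|\bigcup_\ell F_\ell^{*}|$, which is at most the Max $k$-Cover optimum since $\{F_1^{*},\ldots,F_r^{*}\}$ is a feasible $k$-cover; hence the K-anonymous OPT is at most $(\text{Max }k\text{-Cover OPT})/m$. For the lower bound, I would rely on the regularity of the Max $k$-Cover hard instances underlying Proposition 1 (Feige's PCP construction): in the YES case the $k$ chosen sets partition $U$ into blocks of size exactly $m/k=K$, so taking $S_\ell=F_\ell^{*}$ is immediately K-anonymous-feasible and achieves welfare $1=(\text{Max }k\text{-Cover OPT})/m$.

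Combining the two bounds, the K-anonymous OPT equals $1$ on YES instances and is at most $1-1/e+\epsilon$ on NO instances. Any polynomial-time $c$-approximation with $c<e/(e-1)$ would distinguish these two cases, contradicting the NP-hardness underlying Proposition 1.

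The main obstacle I foresee is verifying the balanced-partition property in the YES case. The standard PCP-based hard instances already have it, but if the specific construction Dughmi invokes does not, adjoining $O(m)$ uncovered dummy elements to $U$ scales $m$ (and hence $K$) proportionately without affecting the $e/(e-1)$ gap, which restores balance and lets the reduction go through as a black-box use of Proposition 1.
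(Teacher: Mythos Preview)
Your reduction is correct and takes a genuinely different route from the paper's. The paper does \emph{not} go back to Max $k$-Cover; it reduces black-box from Proposition~1 (the cardinality-constrained hardness). Given a cardinality-constrained instance with $m$ items and signal budget $S$, the paper sets $K=\lceil(m-S)/2+1\rceil$ and adjoins $K\cdot S-m+K-1$ dummy items, each valued $1/S$ by a fresh dedicated dummy bidder. Two short lemmas then show: (i) in any optimal $S$-signal scheme the total overflow $\sum_i(|b_i|-K)^+$ is strictly below $K$, so the dummies suffice to pad every undersized bundle up to size $K$ without touching the large ones (hence $OPT'\ge OPT$); and (ii) stripping the dummies out of an optimal $K$-anonymous solution costs strictly less than~$1$ in welfare, so by integrality of the $V_{ij}$ one recovers an $S$-signal solution of value exactly $OPT$. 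The two optima therefore coincide, and the $e/(e-1)$ gap transfers verbatim.

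Your route replaces this padding-and-rounding argument with a direct appeal to the structure of Feige's hard Max $k$-Cover instances. It is cleaner once you grant that in the YES case the $k$ optimal sets form an exact equipartition of $U$ into blocks of size $m/k$; Feige's construction does guarantee this (all sets have size $m/k$ and YES means a perfect cover), so the obstacle you flag is not really one and the dummy-element patch is unnecessary. The trade-off is that your argument is no longer a black-box consequence of Proposition~1: you must invoke a structural feature of the underlying PCP instances rather than the inapproximability statement alone. The paper's reduction, by contrast, would inherit any strengthening of Proposition~1 automatically, at the cost of the slightly fiddly choice of $K$, the auxiliary lemmas, and the integrality assumption on $V_{ij}$.
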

\begin{proof}
Prove by contradiction. Assume there is a $c$-approximation algorithm for $c<\frac{e}{e-1}$.

Given any signal number-constraint problem instance $I$ with input $m, S, V_{ij}$, $m$ is the number of items, $S$ is signal number constraint, $V_{ij} =v_{i,j}p_{j}$. W.l.o.g, we assume $V_{ij}$ are all integer values by scaling (for the proof of \cite{dughmi} still applies when $v_{i,j}p_{j}$ are rational number). We denote $OPT$ as the optimal solution. We can assume there are exactly $S$ bundles(signals) in $OPT$ as $b_1,b_2,...,b_S$ since we can split bundles without decreasing welfare if bundle (signal) number is less than $S$.

We create a new instance for K-anonymous signaling $I_K$. Set $K=\lceil \frac{m-S}{2}+1 \rceil$, items are $m$ old items plus $K \cdot S-m+K-1$ new items, bidders are $n$ old bidders plus $K \cdot S-m+K-1$ new bidders. The $i$th new bidder only positively values the $i$th new item with $\frac{1}{S}$, and values the other items zero.

Notice $K \cdot S-m+K-1$=$K \cdot (S+1)-m-1$ is non-negative for any $S>0$. Since:
\[
K \cdot (S+1)-m-1 \geq (\frac{m-S}{2}+1) (S+1) - m - 1 \geq m-S+(S+1)-m-1 = 0
\]

We denote $OPT^{'}$ as the optimal solution of instance $I_K$.

\begin{lemma} \label{lem1}
$\sum_{b_i \in OPT} (b_{i}-K)^{+} < K$
\end{lemma}
\begin{proof}
Assume left-hand side $\geq K$, let $c$ the number of bundles that exceeds K items. Then there are $S-c$ bundles with at least one item but less than $K+1$ items.

When $c=0$ the lemma holds obviously, when $c>0$ we have
\begin{align}
      m &\geq c \cdot K+K+S-c \\
        &\geq c \cdot \frac{m-S}{2}+\frac{m-S}{2}+c+1+s-c \\
        &\geq m-S+c+1+s-c = m+1
\end{align}
Contradiction. The second inequality holds by substituting $K=\frac{m-S}{2}+1$, the third inequality holds for $c>0$.
\end{proof}

\begin{lemma} \label{lem2}
$OPT^{'} \geq OPT$
\end{lemma}
\begin{proof}
Since the total number of items are $K \cdot S+K-1$ and $\sum_{b_i \in OPT} (b_{i}-K)^{+} < K$, we can fill bundles $b_i < K$ by new items and fix bundles $b_j \geq K$, making each bundle size not smaller than $K$. Thus we can achieve at least welfare $OPT$.
\end{proof}

Partition $OPT^{'}$ into 2 kinds of bundles $A=\left\{a_1,...,a_{t_1}\right\}$, $C=\left\{c_1,...,c_{t_2}\right\}$. A bundle is in $C$ if and only if all of its items are new items. $t_1+t_2 \leq S$ since the total number of items is $K \cdot S+K-1$. Winners in $A$ are all old bidders since new bidder can value a bundle at most $1/S$ but at least one old bidder value the bundle $\geq 1$. On the other hand, each bundle in $C$ is valued at most $1/S$, and $t_2 < S$, thus sum of welfare in $C$ is less than $(S-1)/S$.

Remove all new items in $OPT^{'}$, we will obtain a feasible solution $OPT^{''}$ for $I$. We only lose welfare in $C$ which is at most $(S-1)/S$.Thus $OPT^{''}>OPT-1$. Since $V_{ij}$ are all integers, we have $OPT^{''} = OPT$. Therefore, we obtain a $c$-approximation algorithm for welfare-maximization with signal number constraint, for $c<e/(e-1)$. Contradiction.

\end{proof}

\subsection{$2e/(e-1)$ Approximation Algorithm}

We provide a $2e/(e-1)$ Approximation Algorithm by incorporate result from \cite{dughmi}

\begin{prop} \cite{dughmi}
For cardinality constrained signaling with known valuations, there is a randomized, polynomial-time, $e/(e-1)$-approximation algorithm for computing the welfare-maximizing signaling scheme.
\end{prop}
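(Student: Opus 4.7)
The plan is to expose the coverage-type structure of the cardinality constrained signaling problem and then round a suitable LP relaxation. First, I would rewrite the objective so that each candidate signal becomes a pair $(T, i_T)$, where $T \subseteq \{1,\dots,m\}$ is a support and $i_T$ is the bidder who wins when that signal is broadcast; its contribution to welfare is then simply $\sum_{j \in T} V_{i_T, j}$. Maximizing total welfare subject to using at most $k$ signals whose supports cover the categories with total probability mass $p_j$ is then a problem with monotone submodular flavor, since merging or splitting signals can only change the welfare through the outer $\max_i$.

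Next I would write a configuration LP with variables $x_{T,i}$ indexed by (support, winning bidder), coverage constraints $\sum_{T \ni j,\, i} x_{T,i} = p_j$ for every category $j$, and the cardinality constraint $\sum_{T,i} x_{T,i} \leq k$. Although there are exponentially many primal variables, the dual has only $m+1$ variables, and its separation oracle asks, given dual prices $\lambda_j$ on categories and $\mu$ on the cardinality budget, for the pair $(T,i)$ maximizing $\sum_{j\in T}(V_{i,j} - \lambda_j) - \mu$. Fixing $i$, the optimal $T$ just includes every $j$ with $V_{i,j} > \lambda_j$, so the oracle runs in $O(nm)$ time and the LP is solvable via the ellipsoid method.

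For the rounding step I would sample $k$ signals independently from the fractional solution and argue, using a correlation-gap / pipage-rounding analysis, that the expected welfare is at least $(1 - 1/e)$ times the LP optimum. This threshold is the same one that appears for max-$k$-coverage and reflects the concavity of the probability that any given category is covered by a high-value signal as a function of the marginal LP contributions. Categories that remain uncovered after sampling can be attached to arbitrary signals without destroying the bound, since the $\max_i$ objective is monotone.

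The main obstacle will be making the correlation-gap argument fully rigorous in the signaling setting, because a signal's welfare depends on the entire subset it carries via the outer $\max_i$, not just on individual coverage events. I expect to handle this either through Vondr\'ak's continuous greedy framework applied to the multilinear extension of the induced submodular coverage function, or through a tailored pipage rounding that preserves the $\max_i$ concavity. Verifying that the integrality loss is exactly $e/(e-1)$ and that the rounded scheme remains a valid partition-style signaling scheme (as opposed to an overlapping cover) is the delicate step, and will likely require a final clean-up phase that reassigns overcovered categories without hurting expected welfare.
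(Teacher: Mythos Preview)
The paper does not actually prove this proposition: its entire proof reads ``The proof is directly from \cite{dughmi}.'' The result is simply quoted as a black box from Dughmi, Immorlica, and Roth, so there is nothing in the present paper to compare your sketch against.

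For what it is worth, your outline is broadly consonant with the approach in the cited reference: a configuration-style LP solved via the ellipsoid method with a polynomial-time separation oracle, followed by randomized rounding that delivers the $(1-1/e)$ factor. The issues you flag at the end---ensuring the rounded scheme is a genuine partition-type signaling scheme rather than an overlapping cover, and that the outer $\max_i$ does not break the correlation-gap/coverage analysis---are real and are exactly the technical work done in the original paper. One small slip: your coverage constraint should read $\sum_{T\ni j,\,i} x_{T,i}=1$ rather than $=p_j$, since the prior $p_j$ is already absorbed into $V_{i,j}=v_{i,j}p_j$ and each category must be assigned with total mass one. With that correction your plan is a faithful reconstruction, but be aware that the paper under review intended this proposition purely as an imported citation, not as something it argues for itself.
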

\begin{proof}
The proof is directly from \cite{dughmi}.
\end{proof}

\begin{thm}
For K-anonymous signaling with known valuations, there is a randomized, polynomial-time, $2e/(e-1)$-approximation algorithm for computing the welfare-maximizing signaling scheme.
\end{thm}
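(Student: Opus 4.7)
The plan is to reduce $K$-anonymous welfare maximization to the cardinality-constrained signaling problem of Proposition 2 and invoke its $e/(e-1)$-approximation as a black box, paying an extra factor of $2$ for the conversion.

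First, I would observe that any $K$-anonymous scheme partitions the $m$ categories into bundles of size at least $K$ and therefore uses at most $S := \lfloor m/K \rfloor$ signals. Hence every feasible $K$-anonymous scheme is feasible for the cardinality-constrained problem with $S$ signals, which gives $OPT_{\text{card}}(S) \geq OPT_{K\text{-anon}}$. I would then run the algorithm of Proposition 2 with cardinality parameter $S$, producing in randomized polynomial time a partition $\mathcal{P} = (B_1, \ldots, B_{S'})$ with expected welfare $\mathbf{E}[W(\mathcal{P})] \geq \frac{e-1}{e}\,OPT_{\text{card}}(S) \geq \frac{e-1}{e}\,OPT_{K\text{-anon}}$.

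Next, I would convert $\mathcal{P}$ into a genuine $K$-anonymous scheme $\mathcal{P}'$ while losing at most a factor of $2$ in welfare. Split the bundles of $\mathcal{P}$ into \emph{large} bundles $L$ (size $\geq K$) and \emph{small} bundles $M$ (size $< K$), with welfare contributions $W_L = \sum_{B \in L} W(B)$ and $W_M = \sum_{B \in M} W(B)$, so that $W(\mathcal{P}) = W_L + W_M$. The algorithm would return the better of two candidate $K$-anonymous solutions. The first candidate keeps every large bundle intact and dumps all items of the small bundles into one designated large bundle; since adding items to a bundle cannot decrease its welfare, this candidate achieves welfare at least $W_L$. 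The second candidate coalesces the small bundles into a single bundle $B^*$, padding with items borrowed from one large bundle when $|B^*| < K$, and keeps the remaining large bundles intact; a careful merging rule (for example, grouping small bundles by their winning bidder before coalescing, so that intra-group merges preserve welfare additively) ensures this candidate recovers a constant fraction of $W_M$.

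Since $W_L + W_M = W(\mathcal{P})$, at least one of the two candidates has welfare $\geq W(\mathcal{P})/2$, so the algorithm's output has expected welfare at least $\frac{1}{2} \cdot \frac{e-1}{e}\,OPT_{K\text{-anon}}$, giving the claimed $\frac{2e}{e-1}$ factor. The main obstacle is the second candidate: naive coalescing can collapse the combined welfare of $M$ all the way down to $\max_{B \in M} W(B)$ when different small bundles are won by different bidders. Designing the merging/padding rule so that the K-anonymous output provably preserves $\Omega(W_M)$, while also respecting the size constraint on the bundle receiving the padding, is the technical heart of the argument; once it is in place, composing the cardinality approximation with the conversion factor yields the theorem.
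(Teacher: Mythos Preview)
Your reduction to the cardinality-constrained problem with $S=\lfloor m/K\rfloor$ and your first candidate (keep the large bundles, absorb the small ones) are exactly what the paper does. The gap is in your second candidate. You propose to \emph{merge} the small bundles into a single bundle $B^{*}$ and hope that a clever grouping-by-winner rule preserves $\Omega(W_{M})$; you then correctly observe that this is the obstacle, but you do not actually resolve it. In fact no such merging rule can work in general: if the small bundles are won by many distinct bidders, any coalescing into fewer than $K$-sized groups forces different winners to share a bundle, and the resulting welfare can collapse to $\max_{B\in M}W(B)$ rather than $\sum_{B\in M}W(B)$.

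The paper's fix goes in the opposite direction and is much simpler: do not merge the small bundles at all; instead, \emph{grow each small bundle separately} by padding it with items taken from the large side until it reaches size $K$. Adding items to a bundle can only increase its welfare, so every small bundle keeps at least its original contribution, and the resulting scheme has welfare $\geq W_{M}$. Feasibility follows from a one-line count: with $t$ small bundles of sizes $a_{1},\dots,a_{t}$ we have $t\le S=\lfloor m/K\rfloor$, hence the number of items outside the small bundles is $m-\sum_{i}a_{i}\ge Kt-\sum_{i}a_{i}$, which is exactly the number needed to pad every small bundle up to $K$. Any leftover items are dumped into one of the padded bundles. This replaces your unresolved ``technical heart'' with a two-line argument and gives the clean factor-$2$ loss.
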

\begin{proof}
Given a K-anonymous signaling instance $I_K$, we set $S$ as $\lfloor m/K \rfloor$ and solve cardinality constrained signaling problem, without K-anonymous constraint.

After obtain the solution $ALG$, assume set $A$ has $t$ signals, each of which has less than $K$ items: $A=\left\{a_1,a_2,...,a_t\right\}$, set $B$ has $\lfloor m/K \rfloor-t$ signals, each of which has no less than $K$ items: $B=\left\{b_1,b_2,...,b_{\lfloor m/K \rfloor -t}\right\}$.

If welfare of $B$ is no less than half of the total welfare, then we are done. Otherwise, $A$'s welfare will be larger than half of the total value. We can fill each $a$ by $B$'s items, since items left is no less than $m-(a_1+...+a_t) \geq k \cdot t-(a_1+...+a_t)$. Then we can obtain $A$'s welfare in a feasible solution, it is also a $2e/(e-1)$ approximation.
\end{proof}

We show the solution of algorithm above can be only half of the optimum.
\begin{example}
There are $K^2+K$ items and $K+2$ bidders. Bidder $A$ values each of the first $K^2$ items for $1/K$. And bidder $B_i$ values the $K^2+i$th item for 1 and others zero. Bidder $C$ value each of the last $K$ items for $1-\frac{\epsilon}{K}$.

The optimal solution's welfare is $2K-\epsilon$, by giving first $K^2$ items to $A$ and last $K$ items to $C$. But the algorithm's best welfare can be at most $K+1$, since $B_i$ will win in $ALG$. The gap approaches to 2 as K goes to infinity.
\end{example}

\subsection{Special Cases}

Given some extra assumption, we can calculate the optimal solution for welfare maximization in polynomial time.

\subsubsection{Signals Number Constraint}
We constrain the number of signals being used as a constant, and propose an algorithm that optimally solve the problem in polynomial time. It consists of 3 steps as below:
\begin{enumerate}
  \item Enumerate signals number and signals' winners. This only cost $O(n^{c})$ time.
  \item Remove bidders who won no signal, making each item chosen by bidder who favors it most. After this step, the $i$th bidder get $k_i$ items. This scheme achieve the best social welfare, yet some winner may win less than k items.
  \item Use cost flow technique to find the best allocation which satisfy the K-anonymous constraint. The graph is shown in the figure 1, arcs are described as $(l,u,c)$, $l$ is the lower bound of flow, $u$ is the upper bound, $c$ is the cost per flow. $w_i$ and $w_{i^{'}}$ correspond to the $i$th bidder, $item_j$ corresponds to the $j$th item. Arc from $st$ to $w_i$ represents the initial number of items of bidder $i$ is $k_i$. Arc from $w_{i}^{'}$ to $end$ represents the final items of bidder $i$ should be no less than $k$. Arc from $w_{i}^{'}$ to $item_j$ represent $i$ can give its $item_j$ to others, leading a welfare decrease of $v_{ij}$. Arc from $item_k$ to $w_i$ represents that bidder $i$ can gain $item_k$ from others, leading a welfare increase of $v_{ij}$. By solving the minimum cost feasible flow problem \cite{flow}, we can obtain final optimal allocation.
\end{enumerate}

\begin{figure}
  \centering
%  % Requires \usepackage{graphicx}
  \includegraphics[width=0.8\textwidth]{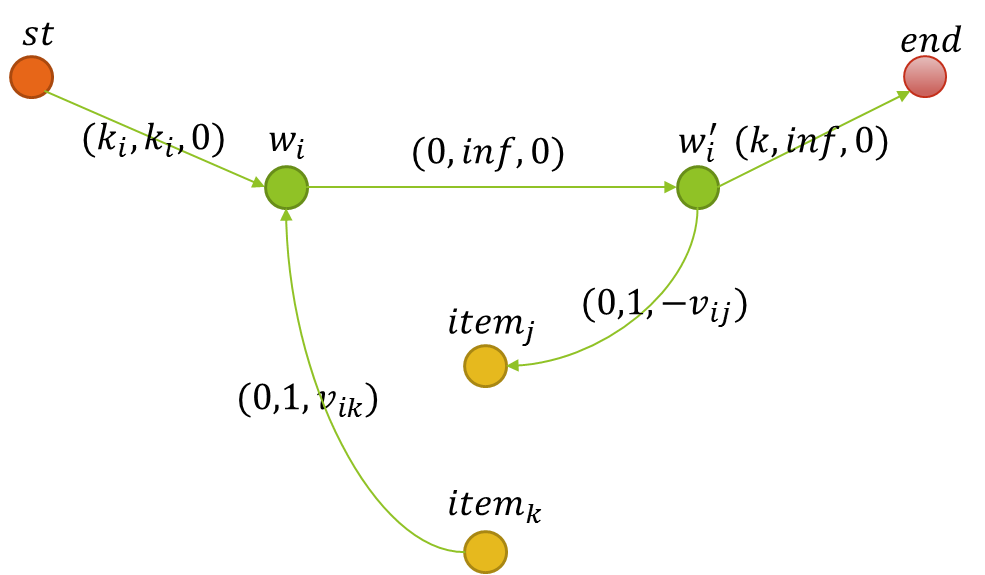}\\
  \caption{Cost Flow}\label{fig:1}
\end{figure}
%%%%%%%%%%%%%%%%%%%%%%%%%%%%%%%%%%%%%%%%%%%%%%%%%%%%%
\subsubsection{Value Constraint}
In this subsection, we propose a setting that the value of bidder $i$ to item $j$ $V_{ij}$ can be expressed as $p_{i} \cdot q_{j}+b_i$. A dynamic programming approach is given that solve it in polynomial-time.

\begin{lemma}
Given any instance of K-anonymous signaling problem, there always exists an optimal signaling scheme that, sort signal winners $\left\{w_1,w_2,...,w_t\right\}$ by $p_i$ in non-decreasing order. $\forall i<j, p_i \neq p_j$, for any category $t_i$ points to $w_i$'s signal, and any category $t_j$ points to $w_j$'s signal, $q_{t_i} \geq q_{t_j}$.
\end{lemma}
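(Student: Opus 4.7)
The plan is a direct exchange argument starting from any optimal signaling scheme. Label the winners $w_1,\dots,w_t$ so that $p_{w_1}\le\cdots\le p_{w_t}$, and suppose for contradiction that there exist indices $i<j$ with $p_{w_i}\neq p_{w_j}$ and categories $t_i\in S_i$, $t_j\in S_j$ whose $q$-values violate the asserted ordering. I will exchange $t_i$ and $t_j$ between the two signals, leaving every other signal untouched, and argue that the resulting scheme is feasible and strictly better in welfare, contradicting optimality.

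Because $|S_i|$ and $|S_j|$ are unchanged, the $K$-anonymous constraint is preserved automatically, and the contributions of all untouched signals cancel in the welfare difference. Using the structural form $V_{ab}=p_aq_b+b_a$, the $b$-terms also cancel (they depend only on signal sizes), and if we temporarily hold $w_i$ and $w_j$ as the winners of the modified signals, the welfare change telescopes to
\[
(p_{w_j}-p_{w_i})(q_{t_i}-q_{t_j}).
\]
Re-optimizing the winner of each modified signal can only increase welfare further, so the true change is at least the displayed expression. Under the hypothesized violation the two factors have matching signs, making this lower bound strictly positive and yielding the contradiction.

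The main obstacle I anticipate is that the best bidder for a signal may no longer be $w_i$ or $w_j$ after the swap; I handle this by always lower-bounding the welfare gain by the ``frozen-winner'' expression above, which is safe because re-optimizing the winner can only help. A minor separate issue is promoting a single local exchange into the global structural claim: among all optima I would select one minimizing the number of pairs $(t_i,t_j)$ violating the asserted $q$-ordering and run the swap argument on any remaining violation, reaching a contradiction with that minimality. The case $p_{w_i}=p_{w_j}$ is explicitly excluded from the statement and requires no argument, since winners with identical $p$ can be relabeled freely without affecting welfare.
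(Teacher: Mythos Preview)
Your approach is essentially the paper's: both run a direct swap argument, exchanging two misordered categories between the two winners' signals and reading off the welfare change as a product of the form $(p_{w_i}-p_{w_j})(q_k-q_l)$. You are somewhat more careful than the paper, which simply writes down the product and concludes by ``repeatedly swapping'': you observe that signal sizes are unchanged so the $K$-constraint is preserved, you lower-bound the post-swap welfare by freezing the old winners and noting that re-optimizing can only help (the paper silently keeps the old winners without comment), and you sketch how to promote a single local swap to the global structural claim via a minimal-violation optimum.

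One slip, shared verbatim with the paper's own proof: with winners sorted in \emph{non-decreasing} $p$ (so $p_{w_j}-p_{w_i}>0$ for $i<j$) and a violation of the stated conclusion meaning $q_{t_i}<q_{t_j}$, the two factors in your expression $(p_{w_j}-p_{w_i})(q_{t_i}-q_{t_j})$ have \emph{opposite} signs, not matching ones. The paper's computation $(p_i-p_j)(q_k-q_l)\ge 0$ has the same inconsistency. The exchange argument is sound once the ordering direction in the lemma is flipped (high $p$ should pair with high $q$, as the subsequent DP in the paper in fact uses); you should just fix the sign claim accordingly rather than asserting ``matching signs''.
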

\begin{proof}
Given any optimal signaling scheme, assume there exists $i<j,p_i \neq p_j, q_k>q_l$ such that $w_i$ possesses $q_l$ and $w_j$ possesses $q_k$. After swap $q_k$ and $q_l$. The welfare change is
\begin{align}\label{}
  p_{i}q_{k}+p_{j}q_{l} - (p_{i}q_{l}+p_{j}q_{k}) &= p_{i}(q_{k}-q_{l}) - p_{j}(q_{k}-q_{l}) \\
    &= (p_{i}-p_{j})(q_{k}-q_{l}) \geq 0
\end{align}
Finally we obtain an optimal signaling scheme that satisfies the lemma by repeatedly swapping.
\end{proof}

Thus we can sort bidders(items) with respect to $p_i$($q_j$). Building state $F(i,j)$, which means the best welfare for bidders $\left\{1,2,...,i\right\}$, items $\left\{1,2,...,j\right\}$.

\begin{equation}
F(i,j) = \max \begin{cases}
                F(i-1,j)\\
                F(i,j-1)\\
                F(i-1,j-K_1)+K_1 \cdot b_i + (\sum_{t=j-K_1+1}^{j} q_t)p_i
                \end{cases}
\end{equation}
$K_1$ are the enumerated number of items that bidder $i$ will win, $K_1 \geq K$. The optimal solution is $F(n,m)$. The complexity is $O(nm^2)$.

\section{Revenue Maximization}

\subsection{Hardness Result}
\begin{thm}
Even when the signals number is constant, it is still NP-hard to solve the revenue maximization problem in K-anonymous constraint setting.
\end{thm}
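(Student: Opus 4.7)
The plan is to reduce from K-anonymous welfare maximization, shown NP-hard in Theorem 1, to K-anonymous revenue maximization by appending a single dummy bidder that always occupies first place in every second-price auction so that the second-highest bid coincides with the welfare-optimal bid. First I would observe that Theorem 1's reduction already yields hardness under a constant-signal restriction: the constructed K-anonymous instance contains $K \cdot S + K - 1$ items, each bundle has size at least $K$, hence any feasible solution uses at most $S$ bundles. Since Dughmi's cardinality-constrained welfare problem is NP-hard even when $S$ is a small constant, Theorem 1 shows that K-anonymous welfare maximization is already NP-hard when the signal count is bounded by a constant.

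Given any K-anonymous welfare instance $I$ with bidders $1,\dots,n$, categories with priors $p_j$, and valuations $v_{ij}$, I would build a revenue instance $I'$ by appending one dummy bidder $b^{\ast}$ with valuation
\[
v_{b^{\ast}, j} \;:=\; \max_{i \leq n} v_{ij} + 1
\]
for every category $j$, keeping categories, priors, the K-anonymous constraint, and the (constant) signal-count bound unchanged. For any signal $S$ with support $T$, the dummy's conditional expectation strictly dominates every original bidder's:
\[
\mathbf{E}[v_{b^{\ast}} \mid S] \;=\; \frac{\sum_{j \in T} p_j (v^{\ast}_j + 1)}{\sum_{j \in T} p_j} \;>\; \frac{\sum_{j \in T} p_j v^{\ast}_j}{\sum_{j \in T} p_j} \;\geq\; \max_{i \leq n} \mathbf{E}[v_i \mid S],
\]
where $v^{\ast}_j = \max_{i \leq n} v_{ij}$. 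Hence $b^{\ast}$ always wins $S$ and pays the second-highest expected bid $\max_{i \leq n}\mathbf{E}[v_i \mid S]$. Multiplying by $Pr[S] = \sum_{j \in T} p_j$ and summing over signals, the revenue of any signaling scheme in $I'$ equals the welfare of the same scheme in $I$. Therefore a polynomial-time algorithm for K-anonymous revenue maximization under a constant signal bound would solve K-anonymous welfare maximization under the same bound, contradicting the NP-hardness established in the previous step.

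The main obstacle I expect is the very first step: arguing cleanly that Dughmi's inapproximability for cardinality-constrained welfare survives when $S$ is a fixed constant, because Theorem 1 passes $S$ through as a parameter of the constructed instance. If a direct constant-$S$ hardness cannot be quoted from \cite{dughmi}, an alternative is to reduce from a classical small-parameter NP-hard problem (e.g., partition of a multiset into two balanced pieces) into K-anonymous welfare max first, and then compose with the dummy-bidder construction; the dummy-bidder step itself is mechanical and preserves both the K-anonymous constraint and the signal-count bound since it touches only the bidder side of the instance.
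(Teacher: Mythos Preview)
Your reduction cannot work, and the obstacle you flag at the end is fatal rather than technical. The problem you want to reduce from---K-anonymous welfare maximization with a constant bound on the number of signals---is in $P$: the paper itself proves this in Section~3.3.1, where it enumerates the at most $O(n^c)$ possible winner profiles and then finds an optimal K-feasible assignment by a min-cost flow. So there is no NP-hardness to inherit, regardless of whether Dughmi's result is quoted for constant or non-constant $S$. Your dummy-bidder construction is sound as an equivalence (revenue in $I'$ equals welfare in $I$), but precisely because of that equivalence it shows only that your special dummy-bidder revenue instances are easy, not that general revenue instances are hard.

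The paper exploits exactly this welfare/revenue gap. It reduces directly from the size-balanced partition problem (split $2K-2$ integers into two sets of size $K-1$ with equal sums) to a revenue instance with three bidders, $2K$ items, and at most two signals. Two bidders each value a single ``anchor'' item at $W/2$; a third bidder has the partition integers as item values. The second-price revenue reaches $W$ iff the integers can be split evenly across the two anchors, which happens iff a balanced partition exists. The point is that the second-highest bid creates a combinatorial constraint (equalize the two sums) that the highest bid does not---this is why revenue is hard with two signals while welfare is not. Your alternative plan of going through welfare first would inherit the polynomial-time solvability and cannot recover this.
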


\begin{lemma}
It is NP-hard to partition $2n$ integers $\{x_1,...,x_{2n} \}$ into two sets with size $n$, such that the two subsets' element sum are the same.
\end{lemma}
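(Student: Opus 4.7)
The plan is to reduce from the classical (unconstrained) Partition problem, which is NP-hard: given $k$ positive integers $\{y_1,\ldots,y_k\}$ with sum $T$, decide whether they admit a split into two subsets of sum $T/2$ each, \emph{with no constraint on the subset sizes}. A polynomial-time reduction will exhibit, for every Partition instance, a balanced-partition instance on $2n$ integers whose answer is yes if and only if the Partition answer is yes.

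The reduction is the following padding construction. Given $\{y_1,\ldots,y_k\}$, pick any positive integer $M$ and form the $2k$-element multiset
\[
X \;=\; \{\,y_1+M,\; y_2+M,\; \ldots,\; y_k+M\,\}\;\cup\;\{\,\underbrace{M,\,M,\,\ldots,\,M}_{k\text{ copies}}\,\},
\]
and set $n=k$. The crucial observation is that in any partition of $X$ into two parts of size $k$ each, if one part contains $a$ shifted elements $y_i+M$ and therefore $k-a$ plain copies of $M$, then its $M$-contribution is $aM+(k-a)M=kM$, which matches the other part by symmetry. Consequently, the two side-sums are equal if and only if the underlying $y_i$'s placed on the two sides have equal sums. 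This gives a bijection between (size-unconstrained) equal-sum partitions of $\{y_1,\ldots,y_k\}$ and balanced equal-sum partitions of $X$: from a Partition solution with side-sizes $a$ and $k-a$, pad each side with $k-a$ and $a$ plain copies of $M$ respectively to reach size $k$; conversely, stripping the padding from any balanced equal-sum partition of $X$ recovers an equal-sum partition of the $y_i$'s.

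The only real obstacle is enforcing the size constraint without disturbing the arithmetic of Partition, and the padding trick above resolves it by supplying slack elements whose $M$-contributions cancel exactly between the two sides regardless of how the shifted elements are distributed. Everything else is routine: the construction runs in polynomial time (each of the $2k$ numbers has encoding length linear in the original input), and both directions of the yes/no equivalence follow directly from the cancellation identity. Hence the balanced partition problem inherits NP-hardness from Partition, establishing the lemma.
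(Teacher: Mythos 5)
Your reduction is correct and takes essentially the same approach as the paper: the paper pads the Partition/Subset-Sum instance with $n$ zeros to absorb the size-$n$ constraint, and your construction (shifting each $y_i$ by $M$ and adding $k$ copies of $M$) is the same padding idea, recovering the paper's reduction exactly at $M=0$, since each balanced side's $M$-contribution is forced to equal $kM$ and cancels. Both directions of the equivalence go through as you argue, so the proof stands; your version is in fact slightly more explicit than the paper's sketch, which only spells out one direction.
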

\begin{proof}
Without size $n$ constraint, it is a classical NP-Complete problem(Subset Sum \cite{subsetsum}) that we want to find a subset whose sum is $\frac{\sum_{i}x_i}{2}$.

Proof by contradiction, assume that the new problem has polynomial time solution. For any Subset Sum problem instance with $n$ integers, we can add $n$ zeros and transfer it into the new problem. By finding two sets with size $n$ and the sum are the same, we can remove zeros and obtain a solution for Subset Sum, contradiction.
\end{proof}

\begin{proof}
Assume the revenue optimization can be polynomially solved. For any instance of Subset Sum Problem with Size $K-1$ constraint(partition $2K-2$ integers into two $K-1$-size sets with the same sum), SSPS. We denote the sum of all integers as $W$ and raise an auction: There are 3 bidders, $2K$ items, signal's support size should be no less than $K$. the value matrix shows in table below:

\begin{tabular}{|c|c|c|c|c|c|c|}
\hline
         &             &             & Item  &       &     &         \\
\hline
         &$\frac{W}{2}$&             &       &       &     &         \\
\hline
  Bidder &             &$\frac{W}{2}$&       &       &     &         \\
\hline
         &             &             & $x_1$ & $x_2$ & ... & $x_{2K-2}$ \\
\hline
\end{tabular}

Signal number can be at most 2. When signal number is one, the revenue can be at most $\frac{W}{2}$.

When signal number is two, if there exists a solution $\{S_1, S_2 \}$ of SSPS, we can bundle item $ \{1 \cup S_1 \} $ and $ \{2 \cup S_2 \} $, achieve revenue $W$. Since any bundle can achieve value at most $W/2$, this is optimal.

When there is no solution for SSPS, given any signaling scheme with 2 signals, if items $1,2$ are not in the same signal, the revenue is strictly less than $W$. When $1,2$ are in the same signal $s_1$, $s_1$'s revenue is $W/2$ and $s_2$'s revenue is strictly less than $W/2$. Therefore, the optimal revenue is strictly less than $W$.

From all above, the optimal revenue of the auction is $W$ if and only if there exists a solution of SSPS, Contradict with the fact that Subset Sum problem is NP-hard.

\end{proof}
\begin{remark}
This proof's rough thought is coincidentally similar to the proof of NP-Hardness of revenue maximization when K-anonymous constraint does not exists. \cite{ghosh}
\end{remark}

\subsection{Approximation Approach}
\cite{dughmi} independently gives a randomized mechanism for approximating revenue in \emph{cardinality constraint} setting, we independently give an deterministic approximation algorithm for \emph{K-anonymous} setting with similar yet different analysis (Since the valuation is known in prior, an algorithm is sufficient to extract revenue).

\begin{thm}
We can achieve 3$\beta$-approximation in revenue maximization problem with K-anonymous constraint, when there is a $\beta$-approximation for welfare maximization problem.
\end{thm}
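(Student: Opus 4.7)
The plan is to use the welfare $\beta$-approximation as a black box, invoking it on the original instance $I$ and on each of the $n$ modified instances $I_{-k}$ in which bidder $k$'s valuations are zeroed out, and to return the bundling whose second-price revenue on $I$ is largest among the $n+1$ candidates. Let $R^{*}$ denote the optimal K-anonymous revenue and $X_0, X_1, \dots, X_n$ the bundlings produced by the welfare $\beta$-approximation on $I, I_{-1}, \dots, I_{-n}$ respectively.

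The first substantial step is a structural lemma: for every bidder $k$, the K-anonymous welfare optimum on $I_{-k}$ is at least $R^{*}$. I would prove this by taking the optimal revenue bundling $OPT_R$ and evaluating its welfare on $I_{-k}$. Bundles whose top bidder in $OPT_R$ was $k$ now contribute the original second-highest value, which equals that bundle's revenue contribution in $OPT_R$; the remaining bundles still contribute at least their original top-bidder value, which is no smaller than their revenue contribution. Summing yields welfare at least $R^{*}$, and hence the $\beta$-approximation run on $I_{-k}$ returns $X_k$ with welfare on $I_{-k}$ at least $R^{*}/\beta$. Writing $u_k(b) = \max_{i \neq k} V_i(b)$, this says $\sum_{b \in X_k} u_k(b) \geq R^{*}/\beta$.

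The second step is translating the $u_k$-welfare guarantee into a revenue guarantee on the original instance. A short case check --- depending on whether bidder $k$ is or is not the top bidder on bundle $b$ in $I$ --- shows that the second-price revenue of $b$ on $I$ satisfies $r(b) \geq \min\{u_k(b), V_k(b)\}$. Combined with the trivial revenue lower bound on $X_0$ (whose welfare is at least $W^{*}/\beta \geq R^{*}/\beta$), the plan is to argue that the best of the $n+1$ candidates achieves revenue at least $R^{*}/(3\beta)$ by partitioning the bundles of $OPT_R$ according to the identity of their top bidder and charging each part to the most informative candidate: the part of $R^{*}$ in which no single bidder dominates is captured already by $X_0$, while the part concentrated on a particular top bidder $k^{\star}$ is captured by $X_{k^{\star}}$.

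The main obstacle is the last step: because $V_k(b)$ on the bundles of $X_k$ may be much smaller than $u_k(b)$, the pointwise bound $\min\{u_k(b), V_k(b)\}$ can collapse, so one cannot naively average over $k$. The plan for overcoming this is a three-way case split according to whether $OPT_R$ has a dominant top bidder across many bundles, a balanced top two per bundle, or a diffuse structure; in each regime, one explicit candidate among $X_0, X_1, \dots, X_n$ inherits a constant fraction of $R^{*}$, and the factor $3$ in $3\beta$ reflects the three regimes combined with the single $\beta$ welfare-approximation loss.
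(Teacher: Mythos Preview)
Your structural lemma (that the K-anonymous welfare optimum on $I_{-k}$ is at least $R^{*}$) is correct and is indeed used in the paper. But the final step of your plan has a genuine gap. You propose to case-split on the structure of $OPT_R$ and then ``charge'' pieces of $R^{*}$ to different candidates $X_0,\dots,X_n$. This cannot work as stated: the bundlings $X_k$ are produced by a welfare oracle and bear no a~priori relation to $OPT_R$, so structural features of $OPT_R$ (dominant top bidder, balanced top two, diffuse) do not transfer to the $X_k$. Concretely, you have $\sum_{b\in X_k} u_k(b)\ge R^{*}/\beta$, but on each bundle where $k$ is \emph{not} the top bidder in $I$ your bound degrades to $r(b)\ge V_k(b)$, which can be zero on every bundle of $X_k$; nothing in the structure of $OPT_R$ prevents this. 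Likewise $X_0$ can have a single bidder winning every bundle, yielding zero revenue, regardless of whether $OPT_R$ is ``diffuse''. So none of your $n+1$ candidates is guaranteed to have nontrivial revenue, and the vague three-way split does not repair this.

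The paper's argument avoids this by \emph{modifying} the welfare-approximate bundling rather than just evaluating its revenue. After computing the approximate welfare scheme $S$, it sorts the winners by their welfare contribution and \emph{merges} the bundles of the $1$st and $2$nd winners, the $3$rd and $4$th, and so on. A merged pair, formerly won by distinct bidders with contributions $w$ and $w'$, now has second price at least $\min(w,w')$, so the merged scheme extracts as revenue at least half of the welfare of $S$ minus the single largest contribution $V^{*}$. The case analysis is then on $S$ itself: if $V^{*}$ is small, the merge already gives $\Omega(OPT)$; if $V^{*}$ is large, remove that bidder, rerun the welfare approximation (using your lemma that the new welfare optimum is still $\ge OPT$), and repeat; if again one bidder dominates, there are now two bidders each with large total value, so selling everything as a single bundle gives second price $\Omega(OPT)$. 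The factor $3$ comes from balancing these three cases, and the $\beta$ from the welfare oracle. The merging step is the idea your proposal is missing.
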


\begin{proof}
We prove that we can transfer welfare maximization result into revenue maximization.

Fix two parameters $\beta$ and $\alpha$, we firstly find the $\beta$-approximate optimal signaling scheme $S$ for welfare maximization in K-anonymous constraint. Denote $V_{i}$ as the $i$th bidder's welfare contribution in $S$. Let $V^{*}=\max(V_{i})$, $OPTW$ as the social welfare of scheme $S$, and $OPT$ as the optimal revenue we can achieve, obviously $OPTW \geq OPT$. We first sort bidders according to $V_i$.

\begin{itemize}
  \item If $V^{*} \leq \beta OPTW$, merge the $1$st and the $2$nd bidders' signal, $3$rd and $4$th bidders' bundles, and so on. We can at least achieve $\frac{1-\beta}{2}OPT$ revenue without violate K-anonymous constraint.

  \item We denote $OPTW(-i^{*})$ as the best social welfare we can achieve when the first bidder $i^*$ has been removed(signaling scheme is $S^{'}$). Obviously, $OPTW(-i^{*}) \geq OPT$. We further denote $V'^{*}$ as the max welfare an individual bidder can contribute in $OPTW(-i^{*})$.

      If $V^{*} \geq \beta OPTW$ and $V'^{*} \leq \alpha OPTW(-i^{*})$, In signaling scheme $S^{'}$, we merge the $1$st and the $2$nd bidders' signals, 3rd and 4th bidders' signals, and so on. We can at least achieve $\frac{1-\alpha}{2}OPT$ revenue.

  \item Otherwise, sell all items in one signal, we can at least achieve $\min(\alpha, \beta)OPT$ revenue.
\end{itemize}
Setting $\alpha=\beta=\frac{1}{3}$, we achieve a $\frac{1}{3}$ competitive ratio.
\end{proof}

\begin{cor}
For K-anonymous signaling with known valuations, there is a randomized, polynomial-time, $6e/(e-1)$-approximation algorithm for computing the revenue-maximizing signaling scheme.
\end{cor}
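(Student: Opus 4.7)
The plan is to simply compose the two approximation results already proved in the paper. The previous theorem gives a randomized polynomial-time $\beta$-to-$3\beta$ reduction from welfare-maximization to revenue-maximization under the $K$-anonymous constraint, and an earlier theorem provides a randomized polynomial-time $\tfrac{2e}{e-1}$-approximation for the welfare-maximization side. Substituting $\beta=\tfrac{2e}{e-1}$ into the reduction yields the claimed $\tfrac{6e}{e-1}$ ratio.

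Concretely, I would first invoke the $\tfrac{2e}{e-1}$-approximation welfare algorithm on the given $K$-anonymous instance to obtain a signaling scheme $S$ together with the per-bidder welfare contributions $V_i$ needed by the reduction. Then I would run the three-branch procedure from the proof of the previous theorem: (i) if the largest $V_i$ is at most $\beta\cdot OPTW$, pair up bidders' signals; (ii) otherwise, strip the top bidder, reoptimize welfare on the residual instance (again using the $\tfrac{2e}{e-1}$-approximation routine), and pair up signals in the resulting scheme if no single bidder dominates; (iii) in the remaining case, pool all categories into a single signal. The pairing in (i) and (ii) preserves the $K$-anonymous constraint because unions of $K$-anonymous signals remain $K$-anonymous, and bundling into one signal in (iii) trivially does. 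Setting $\alpha=\beta=1/3$ inside that argument yields a $3\beta$ loss on top of the welfare approximation factor, so the overall approximation ratio is $3\cdot\tfrac{2e}{e-1}=\tfrac{6e}{e-1}$.

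There is essentially no genuine obstacle here: the result is a one-line composition of the preceding theorems, and randomness enters only through the welfare subroutine. The only thing to be careful about is that the reduction in the previous theorem is stated in terms of an exact welfare optimum $OPTW$, but its analysis only uses $OPTW\geq OPT$ (where $OPT$ is the optimal revenue) and the per-bidder contribution bounds in the chosen scheme; both properties survive when $S$ is a $\beta$-approximate welfare scheme rather than an optimum, which is exactly what makes the $\beta$-to-$3\beta$ statement meaningful. Hence the corollary follows immediately, and no further calculation is required.
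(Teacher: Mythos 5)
Your proposal coincides with the paper's intent: the corollary carries no separate proof precisely because it is the immediate composition of the $\frac{2e}{e-1}$ welfare-approximation theorem with the $3\beta$ revenue reduction, giving $3\cdot\frac{2e}{e-1}=\frac{6e}{e-1}$, with randomness entering only through the welfare subroutine — exactly as you argue. One minor bookkeeping caution (shared by the paper's own wording): when $S$ is only a $\beta$-approximate welfare scheme, the inequality $OPTW\geq OPT$ need not hold verbatim; the correct chain is $OPTW\geq OPT_{\mathrm{welfare}}/\beta\geq OPT/\beta$, which is precisely where the factor $\beta$ is paid, so the final $3\beta$ (hence $\frac{6e}{e-1}$) guarantee is unaffected.
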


\section{Future Work}
An $\alpha$-approximation algorithm for welfare maximization problem that $\frac{e}{e-1} \leq \alpha < \frac{2e}{e-1}$ is the first direction that we may keep working on.

The second future direction is to develop better approximation algorithm for revenue maximization independently, without transferring welfare maximization results.

Finally, our work only focus on signaling schemes that $\phi(j,S) \in \{0,1\}$. In probabilistic setting with mixed signal that mentioned in \cite{ec12a}, new K-anonymous signaling scheme setting may be proposed, and we may get other useful approximation algorithms or even an optimal one.

\end{document}